\documentclass[a4paper,twoside,11pt]{article}
\usepackage[margin=2.9cm]{geometry}
\usepackage[affil-it]{authblk}

\usepackage{amsmath,amssymb,amsthm,amsfonts}
\usepackage{graphicx}
\usepackage{url,xspace,soul,wrapfig}
\usepackage{subfig,caption}
\usepackage{paralist,enumerate,enumitem}
\usepackage{todonotes}

\pagestyle{plain}
\graphicspath{{figures/}}

% ============================================================
% New commands
% ============================================================
\newcommand{\sref}[1]{\protect\subref{#1}}

\newcommand{\naesat}{\textsc{naesat}\xspace}

\newcommand{\TWOcolorability}{2-colorability\xspace}
\newcommand{\THREEcoloring}{\textsc{3-coloring}\xspace}
\newcommand{\THREEcolorings}{3-colorings\xspace}
\newcommand{\THREEcolorable}{3-colorable\xspace}

\newcommand{\coloring}[2]{tree-diameter-$#2$ $#1$-coloring\xspace}
\newcommand{\colorings}[2]{tree-diameter-$#2$ $#1$-colorings\xspace}
\newcommand{\colorable}[2]{tree-diameter-$#2$ $#1$-colorable\xspace}

\newcommand{\konecoloring}[1]{(edge, $#1$)-coloring\xspace}

\newcommand{\konecolorable}[1]{(edge, $#1$)-colorable\xspace}

\newcommand{\kcoloring}[1]{(star, $#1$)-coloring\xspace}
\newcommand{\kcolorings}[1]{(star, $#1$)-colorings\xspace}
\newcommand{\kcolorable}[1]{(star, $#1$)-colorable\xspace}
\newcommand{\twocoloring}{\kcoloring{2}}
\newcommand{\twocolorings}{\kcolorings{2}}
\newcommand{\twocolorable}{\kcolorable{2}}
\newcommand{\threecoloring}{\kcoloring{3}}

\newcommand{\threecolorable}{\kcolorable{3}}

% ============================================================
\title{Vertex-Coloring with Star-Defects}
\author[1]{Patrizio~Angelini}
\author[1]{Michael~A.~Bekos}
\author[1]{Michael~Kaufmann}
\author[2]{Vincenzo~Roselli}
\affil[1]{Wilhelm-Schickhard-Institut f\"ur Informatik, Universit\"at T\"ubingen, Germany\\
$\{$angelini,bekos,mk$\}$@informatik.uni-tuebingen.de}
\affil[2]{Dipartimento di Ingegneria, Universit{\`a} Roma Tre,
Italy\\~~~~~~~~roselli@dia.uniroma3.it}
\date{}
\newtheorem{lemma}{Lemma}
\newtheorem{theorem}{Theorem}

% ============================================================

\begin{document}
\maketitle

% ============================================================
\begin{abstract}
\emph{Defective coloring} is a variant of traditional
vertex-coloring, according to which adjacent vertices are allowed to
have the same color, as long as the monochromatic components induced
by the corresponding edges have a certain structure. Due to its 
important applications, as for example in the bipartisation of graphs, this
type of coloring has been extensively studied, mainly with respect to
the size, degree, and acyclicity of the monochromatic components.

In this paper we focus on defective colorings in which the
monochromatic components are acyclic and have small diameter,
namely, they form stars. For outerplanar graphs, we give a
linear-time algorithm to decide if such a defective coloring exists
with two colors and, in the positive case, to construct one. Also, we prove that
an outerpath (i.e., an outerplanar graph whose weak-dual is a path)
always admits such a two-coloring. Finally, we present
NP-completeness results for non-planar and planar graphs of bounded
degree for the cases of two and three colors.
\end{abstract}
% ============================================================

% ============================================================
\section{Introduction}
\label{sec:introduction}
% ============================================================

Graph coloring is a fundamental problem in graph theory, which has
been extensively studied over the years (see,
e.g.,~\cite{DBLP:books/daglib/0077283} for an overview). Most of the
research in this area has been devoted to the \emph{vertex-coloring
problem} (or \emph{coloring problem}, for short), which dates back to
1852~\cite{MM12}. In its general form, the problem asks to label the
vertices of a graph with a given number of colors, so that no two
adjacent vertices share the same color. In other words, a coloring of
a graph partitions its vertices into a particular number of
independent sets (each of these sets is usually referred to as a
\emph{color class}, as all its vertices have the same color). A
central result in this area is the so-called \emph{four color
theorem}, according to which every planar graph admits a coloring
with at most four colors; see e.g.~\cite{MR2463991}. Note that the
problem of deciding whether a planar graph is $3$-colorable is
NP-complete~\cite{DBLP:books/fm/GareyJ79}, even for graphs of maximum
degree~$4$~\cite{d-uccp-80}.

Several variants of the coloring problem have been proposed over the
years. One of the most studied is the so-called \emph{defective
coloring}, which was independently introduced by Andrews and
Jacobson~\cite{aj-gcn-85}, Harary and Jones~\cite{hj-cc-85}, and
Cowen et al.~\cite{cgj-97}. In the defective coloring problem edges
between vertices of the same color class are allowed, as long as the
monochromatic components induced by vertices of the same color
maintain some special structure. In this respect, one can regard the
classical vertex-coloring as a defective one in which every
monochromatic component is an isolated vertex, given that every
color class defines an independent set. In this work we focus on
defective colorings in which each component is acyclic and has
small diameters. In particular, we call a graph $G$
\emph{\colorable{\kappa}{\lambda}} if the vertices of $G$ can be
colored with $\kappa$ colors, so that all monochromatic components
are acyclic and of diameter at most~$\lambda$, where $\kappa \geq
1$, $\lambda \geq 0$. Clearly, a classical $\kappa$-coloring
corresponds to a \coloring{\kappa}{0}.
The \emph{diameter of a coloring} is defined as the maximum diameter
among the monochromatic components.

We present algorithmic and complexity results for
\colorings{\kappa}{\lambda} for small values of $\kappa$ and
$\lambda = 2$. For simplicity, we refer to this problem as
\emph{\kcoloring{\kappa}}, as each monochromatic component is a
\emph{star} (i.e., a tree with diameter two; see
Figure~\ref{fig:star}). Similarly, we refer to the
\coloring{\kappa}{\lambda} problem when $\lambda=1$ as
\emph{\konecoloring{\kappa}} problem. By definition, a
\konecoloring{\kappa} is also a \kcoloring{\kappa}.
Figs.\ref{fig:example1}-\ref{fig:example3} show a trade-off between
number of colors and structure of the monochromatic components.

Our work can be seen as a variant of the \emph{bipartisation} of
graphs, namely the problem of making a graph bipartite by removing a
small number of elements (e.g, vertices or edges), which is a central
graph problem with many applications~\cite{Hadlock75,Karp72}. The
bipartisation by removal (a not-necessarily minimal number of)
\emph{non-adjacent} edges corresponds to the \konecoloring{2}
problem. In the \kcoloring{2} problem, we also solve some kind of
bipartisation by removing independent stars. Note that we do not ask
for the minimum number of removed stars but for the existence of a
solution.

\begin{figure}[tb]
	\centering
	\subfloat[\label{fig:example1}]{\includegraphics[scale=1.4,page=1]{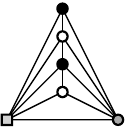}}
	\hfil
	\subfloat[\label{fig:example2}]{\includegraphics[scale=1.4,page=2]{figures}}
	\hfil
	\subfloat[\label{fig:example3}]{\includegraphics[scale=1.4,page=3]{figures}}
	\hfil
	\subfloat[\label{fig:star}]{\includegraphics[scale=1.4,page=4]{figures}}
	\caption{(a-c) Different colorings of the same graph:
	(a)~a traditional $4$-coloring, 
	(b)~an \konecoloring{3}
	(c)~a \twocoloring; 
	(d)~a star with three leaves; its \emph{center} has degree~$3$.}
	\label{fig:examples}
\end{figure}

To the best of our knowledge, this is the first time that the
defective coloring problem is studied under the requirement of
having color classes of small diameter. Previous research was
focused either on their size or their
degree~\cite{aj-gcn-85,cgj-97,hj-cc-85,Lo66,Lovasz1975269}. As
byproducts of these works, one can obtain several results for the
\konecoloring{\kappa} problem. More precisely, from a result of
Lov\'{a}sz~\cite{Lo66}, it follows that all graphs of maximum degree
$4$ or $5$ are \konecolorable{3}. However, determining whether a
graph of maximum degree~$7$ is \konecolorable{3} is
NP-complete~\cite{cgj-97}. In the same work, Cowen et
al.~\cite{cgj-97} prove that not all planar graphs are
\konecolorable{3} and that the corresponding decision problem is
NP-complete, even in the case of planar graphs of maximum
degree~$10$. Results for graphs embedded on general surfaces are
also known~\cite{a-ndc-87,ccw-86,cgj-97}. Closely related is also
the so-called \emph{tree-partition-width} problem, which is a
variant of the defective coloring problem in which the
graphs induced by each color class must be
acyclic~\cite{Ding199645,ehno-aepofbtg-11,Wood20091245}, i.e., there
is no restriction on their diameter. Our contributions are:
\begin{itemize}
\item In Section~\ref{sec:algorithms}, we present a linear-time
algorithm to determine whether an outerplanar graph is \twocolorable.
Note that outerplanar graphs are \THREEcolorable \cite{ps-evecog-86},
and hence \threecolorable, but not necessarily \twocolorable. On the
other hand, we can always construct \twocolorings for outerpaths
(which form a special subclass of outerplanar graphs whose
weak-dual\footnote{Recall that the \emph{weak-dual} of a plane graph
is the subgraph of its dual induced by neglecting the face-vertex
corresponding to its unbounded face.} is a path).

\item In Section~\ref{sec:np-completeness}, we prove that the
\twocoloring problem is NP-complete, even for graphs of maximum
degree $5$ (note that the corresponding \konecoloring{2} problem is
NP-complete, even for graphs of maximum degree $4$~\cite{cgj-97}).
Since all graphs of maximum degree $3$ are
\konecolorable{2}~\cite{Lo66}, this result leaves open only the case
for graphs of maximum degree~$4$. We also prove that the
\threecoloring problem is NP-complete, even for graphs of maximum
degree $9$ (recall that the corresponding \konecoloring{3} problem is
NP-complete, even for graphs of maximum degree~$7$~\cite{cgj-97}).
Since all graphs of maximum degree $4$ or $5$ are
\konecolorable{3}~\cite{Lo66}, our result implies that the
computational complexity of the \threecoloring problem remains
unknown only for graphs of maximum degree $6$, $7$, and $8$. For
planar graphs, we prove that the \twocoloring problem remains
NP-complete even for triangle-free planar graphs (recall that
triangle-free planar graphs are always \THREEcolorable~\cite{Kow10},
while the test of \TWOcolorability can be done in linear~time).
\end{itemize}

% =================================================================
\section{Coloring Outerplanar Graphs and Subclasses}
\label{sec:algorithms}
% =================================================================
In this section we consider \twocolorings of outerplanar graphs. To
demonstrate the difficulty of the problem, we first give an example
(see Figure~\ref{lem:outercounterexample}) of a small outerplanar
graph not admitting any \twocoloring. Therefore, in
Theorem~\ref{thm:outerplanar} we study the complexity of deciding
whether a given outerplanar graph admits such a coloring and present
a linear-time algorithm for this problem; note that outerplanar
graphs always admit \THREEcolorings~\cite{ps-evecog-86}. Finally, we
show that a notable subclass of outerplanar graphs, namely
outerpaths, always admit \twocolorings by providing a constructive
linear-time algorithm (see Theorem~\ref{thm:outerpaths}).

\begin{lemma}\label{lem:outercounterexample}
There exist outerplanar graphs that are not \twocolorable.
\end{lemma}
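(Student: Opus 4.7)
The plan is to exhibit a specific small outerplanar graph $G$, together with an outerplanar drawing making its outerplanarity immediate, and then argue by exhaustive case analysis that $G$ admits no \twocoloring. This is a concrete small-example argument rather than a structural one.

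The verification rests on a single observation: in any \twocoloring, each monochromatic component must be a star (a tree of diameter at most two), so no color class may contain a monochromatic cycle (in particular, no monochromatic triangle) nor a monochromatic path on four vertices (which would create a component of diameter at least three). Assuming for contradiction that a valid \twocoloring of $G$ exists, I would first fix by symmetry the color of one distinguished vertex $v^{*}$, say red, which halves the search space. Then I would propagate constraints through these two forbidden configurations: every triangle of $G$ must contain a blue vertex; every pair of adjacent red vertices forces all of their common neighbors to be blue (else a red triangle would arise); and whenever three red vertices already form a monochromatic path of length two, no edge of $G$ may extend it to a monochromatic path of length three. The symmetric rules apply to the blue color class. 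Branching on the colors of the few remaining undetermined vertices yields a small decision tree, every branch of which I would then show ends with one of the two forbidden configurations, giving the desired contradiction.

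The main obstacle is the choice of $G$: the graph must contain enough triangle structure that the propagation rules are strong, yet be small enough that the branching tree can be written down explicitly. A useful design principle is to include a vertex of high degree that participates in many triangles, so that once its color is fixed, the colors of its neighbors are forced into a narrow set of configurations, each of which can then be shown to produce a monochromatic path on four vertices or a monochromatic triangle elsewhere in $G$. With the distinguished initial color fixed by symmetry, the verification reduces to a short finite check.
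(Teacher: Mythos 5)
Your overall strategy --- exhibit one concrete outerplanar graph and rule out every $2$-coloring by propagating the two forbidden configurations (a monochromatic cycle and a monochromatic path on four vertices) --- is exactly the approach the paper takes. The paper's witness is a graph with a high-degree vertex $u$ whose eight fan-neighbors $u_1,\dots,u_8$ form a path, with additional four-vertex paths attached at $u_2$ and $u_3$; the argument then runs precisely the kind of forced-propagation case analysis you describe (if $u$ is gray, two of its fan-neighbors must be gray, so $u$ is a star center; then each of $u_2,u_3$ is either white or forces its attached path to be all white; combining these always yields a monochromatic $P_4$).

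However, your proposal has a genuine gap: you never actually specify the graph $G$. The lemma is an existence statement, so the witness \emph{is} the proof; a ``design principle'' for finding one is not a substitute. This is not a mere formality, because the construction is constrained in a way your sketch underestimates: outerplanar graphs are quite sparse (every one has a vertex of degree at most two, and they exclude $K_4$ and $K_{2,3}$ as minors), so ``a vertex of high degree that participates in many triangles'' can only be realized as a fan, and a single fan is easily $2$-star-colored by alternating colors along its rim. The paper's example needs the auxiliary pendant paths hanging off specific fan vertices precisely to defeat that alternating coloring, and identifying that extra structure is the actual mathematical content here. Until you write down a specific outerplanar graph and carry the finite check through to a contradiction in every branch, the lemma is not proved.
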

\begin{proof}
We prove that the outerplanar graph of
Figure~\ref{fig:outercouterexample} is not \twocolorable. In
particular, we show that in any $2$-coloring of this graph there
exists a monochromatic path of four vertices.
Assume w.l.o.g.~that vertex $u$ has color gray. Then, at least two
vertices out of $u_1,\dots,u_8$ are gray, as otherwise there would be
a path of four white vertices. Hence, $u$ is the center of a gray star.

Next, we observe that either $u_2$ is white or the path
$u_{21},\dots,u_{24}$ must consist of only white vertices.
% Since $u$ is the center of a gray star, $u_2$ is a leaf. Hence all
% its neighbors other than $u$ must be white. This implies that path
% $u_{21},\dots,u_{24}$ consists only of white vertices.
% prove that $u_3$ cannot be gray.
Similarly, we observe that either $u_3$ is white or the path
$u_{31},\dots,u_{34}$ must consist of only white vertices.
If both $u_2$ and $u_3$ are white, then either one of paths
$u_{21},\dots,u_{24}$ and $u_{31},\dots,u_{34}$ consists only of gray
vertices, or there exists a path from one of $u_{21},\dots,u_{24}$
via $u_2$ and $u_3$ to one of $u_{31},\dots,u_{34}$, that consists
only of white vertices. Clearly, all aforementioned cases lead to a
monochromatic path of four vertices.
\end{proof}

\begin{figure}[htb]
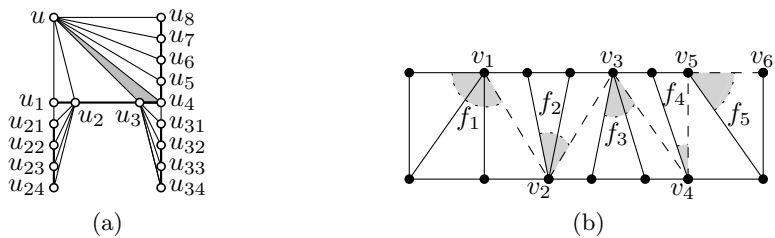

	\centering
	\subfloat[\label{fig:outercouterexample}]{\includegraphics[page=5]{figures}}
	\hfil
	\subfloat[\label{fig:outer-defs}]{\includegraphics[page=6]{figures}} 
	\caption{(a)~An outerplanar graph that is not \twocolorable. 
	(b)~An outerpath, whose spine edges are drawn as dashed segments.
	Dotted arcs highlighted in gray correspond to edges belonging to
	the fan of each spine vertex. Note that $|f_6|=0$.}
	\label{fig:outerpath}
\end{figure}

Lemma~\ref{lem:outercounterexample} implies that not all
outerplanar graphs are \twocolorable. In the following we give a
linear-time algorithm to decide whether an outerplanar graph is
\twocolorable and in case of an affirmative answer to compute the
actual coloring.

\begin{theorem}\label{thm:outerplanar}
Given an outerplanar graph $G$, there exists a linear-time
algorithm to test whether $G$ admits a \twocoloring and to construct
a \twocoloring, if one exists.
\end{theorem}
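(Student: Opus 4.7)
The plan is to design a dynamic-programming algorithm that runs bottom-up on the weak-dual tree $T$ of $G$. I would first reduce to the $2$-connected case: decompose $G$ along its block-cut tree, process each block separately, and then combine the partial information stored at each cut vertex (single-edge blocks are trivial). So assume from now on that $G$ is $2$-connected. Root $T$ at an arbitrary face $r$; for each non-root face $f$, let $e_f = u_f v_f$ be the chord that $f$ shares with its parent face in the embedding, and let $G_f$ denote the subgraph of $G$ induced by the vertices on $f$ or on any descendant face of $f$ in $T$.

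Next I would reformulate the target property. A \twocoloring of $G$ is equivalent to a $2$-coloring of $V(G)$ in which each monochromatic component is a star, and this is in turn equivalent to forbidding both monochromatic triangles and monochromatic paths on four vertices. To track this property locally, for each vertex I would store a \emph{role}, taken from a small constant-size alphabet, indicating how far its monochromatic component has been committed: $I$ (no same-color neighbor so far), $L$ (a committed leaf: exactly one same-color neighbor, already declared a center, so no further same-color neighbor is allowed), and $C$ (a committed center, with one or more same-color leaves already attached and still able to accept more). For each face $f$ the algorithm will maintain a boolean table indexed by the joint state---color and role within $G_f$---of $u_f$ and $v_f$; this table has constant size.

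The tables will be filled bottom-up in $T$. To fill the table of a face $f$, I plan to run a secondary linear-time DP around the boundary cycle of $f$, starting at $u_f$ and ending at $v_f$ and extending a partial assignment one boundary edge at a time. Each such boundary edge is either an outer edge of $G$---which directly imposes a same-color constraint on its endpoints---or a child-chord $e_c$, in which case the algorithm consults the previously computed table of $c$. At each vertex $v$ on the boundary of $f$ the algorithm merges role contributions coming from (i)~the boundary edges incident to $v$ and (ii)~the at-most-two child subtrees whose parent-chord has $v$ as an endpoint, according to explicit merge rules: for instance, combining two $L$'s, or an $L$ with a $C$, is infeasible because it would produce two adjacent centers in the same color class, hence a monochromatic $P_4$. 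Checking whether any feasible state exists at the root $r$ decides the problem, and a standard back-trace reconstructs an explicit coloring in the affirmative case.

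The main obstacle is pinning down the role alphabet and the merge rules precisely enough so that they enforce exactly the disjoint-union-of-stars invariant: in particular, a degree-one same-color neighbor in some $G_c$ may need an additional flag according to whether its partner has already been fixed as a center or is still free to become a leaf, and contributions from two child subtrees meeting at a common boundary vertex must not silently create a second center inside the same monochromatic component. Once this is done, linear-time follows: each face $f$ is processed in $O(|\partial f|)$ with constant-size tables and transitions, and $\sum_f |\partial f|$ is $O(|E(G)|) = O(n)$ since every edge of an outerplanar embedding lies on at most two face boundaries.
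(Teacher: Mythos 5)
Your proposal follows essentially the same route as the paper: a bottom-up dynamic program over the weak-dual tree of the (reduced-to-biconnected) outerplanar graph, with a constant-size table per face recording the color and role (isolated / leaf / center) of the two endpoints of the attachment chord, and with the children's tables merged along the face boundary under consistency rules such as ``a leaf in one child must be isolated in the next.'' The one point you explicitly leave open --- the extra flag for a vertex whose only same-color neighbor is the other endpoint of the chord and whose star structure is not yet committed --- is precisely the paper's fourth role, ``undefined,'' and the paper's merge conditions resolve it exactly as you anticipate, so your plan is sound and matches the published argument.
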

\begin{proof}
%%%
% TODO: we could add a figure here for the different cases
%%%
We assume that $G$ is embedded according to its outerplanar
embedding. We can also assume that $G$ is biconnected. This is not a
loss of generality, as we can always reduce the number of
cut-vertices by connecting two neighbors $a$ and $b$ of a cut-vertex
$c$ belonging to two different biconnected components with a path
having two internal vertices. Clearly, if the augmented graph is
\twocolorable, then the original one is \twocolorable. For the other
direction, given a \twocoloring of the original graph, we can obtain
a corresponding coloring of the augmented graph by coloring the
neighbors of $a$ and $b$ with different color than the ones of $a$
and $b$, respectively.

Denote by $T$ the weak dual of $G$ and root it at a leaf $\rho$ of
$T$. For a node $\mu$ of $T$, we denote by $G(\mu)$ the subgraph of
$G$ corresponding to the subtree of $T$ rooted at $\mu$. We also
denote by $f(\mu)$ the face of $G$ corresponding to $\mu$ in $T$. If
$\mu \neq \rho$, consider the parent $\nu$ of $\mu$ in $T$ and their
corresponding faces $f(\nu)$ and $f(\mu)$ of $G$, and let $(u,v)$ be
the edge of $G$ shared by $f(\nu)$ and $f(\mu)$. We say that $(u,v)$
is the \emph{attachment edge} of $G(\mu)$ to $G(\nu)$.
The attachment edge of the root $\rho$ is any edge of face $f(\rho)$
that is incident to the outer face (since $G$ is biconnected and
$\rho$ is a leaf, this edge always exists). Consider a \twocoloring
of $G(\mu)$. In this coloring, each of the endpoints $u$ and $v$ of
the attachment edge of $G(\mu)$ plays exactly one of the following
roles:
\begin{inparaenum}[$(i)$] 
\item \emph{center} or 
\item \emph{leaf} of a colored star; 
\item \emph{isolated vertex}, that is, it has no neighbor with the
same color; or 
\item \emph{undefined}, that is, the only neighbor of $u$ (resp.
$v$) which has its same color is $v$ (resp. $u$). Note that if the only
neighbor of $u$ (resp. $v$) which has its same color is different
from $v$ (resp. from $u$), we consider $u$ (resp. $v$) as a center.
\end{inparaenum}
Two \twocolorings of $G(\mu)$ are \emph{equivalent} w.r.t. the
attachment edge $(u,v)$ of $G(\mu)$ if in the two \twocolorings each
of $u$ and $v$ has the same color and plays the same role.
This definition of equivalence determines a partition of the
colorings of $G(\mu)$ into a set of equivalence classes. Since both
the number of colors and the number of possible roles of each vertex
$u$ and $v$ are constant, the number of different equivalence
classes is also constant %, namely $38$ 
(note that, when the role is undefined, $u$ and $v$ must have the
same color).

In order to test whether $G$ admits a \twocoloring, we perform a
bottom-up traversal of $T$.
When visiting a node $\mu$ of $T$ we compute the maximal set
$C(\mu)$ of equivalence classes such that, for each class $C \in
C(\mu)$, graph $G(\mu)$ admits at least a coloring belonging to $C$.
Note that $|C(\mu)| \le 38$.
In order to compute $C(\mu)$, we consider the possible equivalence
classes one at a time, and check whether $G(\mu)$ admits a
\twocoloring in this class, based on the sets
$C(\mu_1),\dots,C_(\mu_h)$ of the children $\mu_1,\dots,\mu_h$ of
$\mu$ in $T$, which have been previously computed.
In particular, for an equivalence class $C$ we test the existence of
a \twocoloring of $G(\mu)$ belonging to $C$ by selecting an
equivalence class $C_i \in C(\mu_i)$ for each $i = 1,\dots,h$ in
such a way that:
\begin{enumerate}
\item the color and the role of $u$ in $C_1$ are the same as the
ones $u$ has in $C$;
\item the color and the role of $v$ in $C_h$ are the same as the
ones $v$ has in $C$;
\item for any two consecutive children $\mu_i$ and $\mu_{i+1}$, let
$x$ be the vertex shared by $G(\mu_i)$ and $G(\mu_{i+1})$. Then, $x$
has the same color in $C_i$ and $C_{i+1}$ and, if $x$ is a leaf in
$C_i$, then $x$ is isolated in $C_{i+1}$ (or vice-versa); and
\item for any three consecutive children $\mu_{i-1}$, $\mu_i$, and
$\mu_{i+1}$, let $x$ (resp. $y$) be the vertex shared by
$G(\mu_{i-1})$ and $G(\mu_i)$ (resp. by $G(\mu_{i})$ and
$G(\mu_{i+1})$). Then, $x$ (resp. $y$) has the same color in $C_i$
and $C_{i-1}$ (resp. $C_{i+1}$); also, if $x$ and $y$ are both
undefined in $C_i$, then in $C_{i-1}$ and $C_{i+1}$ none of $x$ and
$y$ is a leaf, and at least one of them is isolated.
\end{enumerate}

Note that the first two conditions ensure that the coloring belongs
to $C$, while the other two ensure that it is a \twocoloring. Since
the cardinality of $C(\mu_i)$ is bounded by a constant, the test
can be done in linear time. If the test succeeds, add $C$ to
$C(\mu)$. 

Once all $38$ equivalence classes are tested, if $C(\mu)$ is empty,
then we conclude that $G$ is not \twocolorable. Otherwise we proceed
with the traversal of $T$. At the end of the traversal, if $C(\rho)$
is not empty, we conclude that $G$ is \twocolorable. A \twocoloring
of $G$ can be easily constructed by traversing $T$ top-down, by
following the choices performed during the bottom-up visit.
\end{proof}
In the following, we consider a subclass of outerplanar graphs,
namely outerpaths, and we prove that they always admit \twocolorings.
Note that the example that we presented in
Lemma~\ref{lem:outercounterexample} is ``almost'' an outerpath,
meaning that the weak-dual of this graph contains only degree-$1$ and
degree-$2$ vertices, except for one specific vertex that has 
degree~$3$ (see the face of Figure~\ref{fig:outercouterexample}
highlighted in gray). Recall that the weak-dual of an outerpath is a
path (hence, it consists of only degree-$1$ and degree-$2$ vertices).

Let $G$ be an outerpath (see Figure~\ref{fig:outer-defs}). We assume
that $G$ is inner-triangulated. This is not a loss of generality, as
any \twocoloring of a triangulated outerpath induces a \twocoloring
of any of its subgraphs. We first give some definitions. We call
\emph{spine vertices} the vertices $v_1,v_2,\dots,v_m$ that have
degree at least four in $G$. We consider an additional spine vertex
$v_{m+1}$, which is the (unique) neighbor of $v_m$ along the cycle
delimiting the outer face that is not adjacent to $v_{m-1}$. Note
that the spine vertices of $G$ induce a path, that we call
\emph{spine} of $G$\footnote{Note that the spine of $G$ coincides
with the spine of the caterpillar obtained from the outerpath $G$ by
removing all the edges incident to its outer face, neglecting the
additional spine vertex $v_{m+1}$.}. The \emph{fan} $f_i$ of a spine
vertex $v_i$ consists of the set of neighbors of $v_i$ in $G$, except
for $v_{i-1}$ and for those following and preceding $v_i$ along the
cycle delimiting the outer face\footnote{Fan $f_i$ contains all the
leaves of the caterpillar incident to $v_i$, plus the following spine
vertex $v_{i+1}$.}; note that $|f_i|\ge 1$ for each $i=1,\dots,m$,
while $|f_{m+1}|=0$. For each $i = 1, \dots,m+1$, we denote by $G_i$
the subgraph of $G$ induced by the spine vertices $v_1,\dots,v_i$ and
by the fans $f_1,\dots,f_{i-1}$. Note that $G_{m+1}=G$. We denote by
$c_i$ the color assigned to spine vertex $v_i$, and by $c(G_i)$ a
coloring of graph $G_i$. Finally, we say that an edge of $G$ is
\emph{colored} if its two endpoints have the same color.

\begin{figure}[tb]
	\centering
	\includegraphics[page=7]{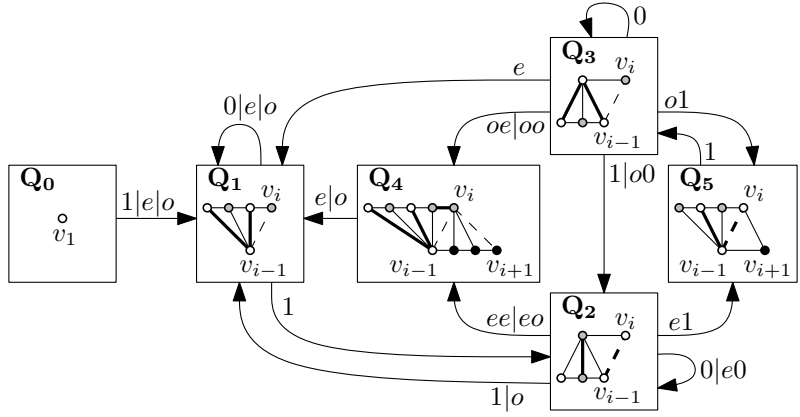}
	\caption{Schematization of the algorithm. Each node represents the
	(unique) condition satisfied by $G_i$ at some step $0\le i\le k$.
	An edge label $0,1,e,o$ represents the fact that the cardinality of
	a fan $f_i$ is $0$, $1$, even $\neq 0$, or odd $\neq 1$. If the
	label contains two characters, the second one describes the cardinality of $f_{i+1}$.
	An edge between $Q_j$ and $Q_h$ with label $x\in\{1,e,o\}$ (with
	label $xy$, where $y\in\{0,1,e,o\}$) represents the fact that, if
	$G_i$ satisfies condition $Q_j$ and $|f_i|=x$ (resp. $|f_i|=x$ and
	$|f_{i+1}|=y$), then $f_i$ is colored so that $G_{i+1}$ satisfies
	$Q_h$.}\label{fig:automaton}
\end{figure}

\begin{theorem}\label{thm:outerpaths}
Every outerpath admits a \twocoloring, which can be computed in
linear time.
\end{theorem}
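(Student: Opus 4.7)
The plan is to process the spine $v_1,v_2,\ldots,v_{m+1}$ from left to right, coloring one fan at a time, and to maintain the invariant that the partial coloring of $G_i$ is a valid \twocoloring whose ``right boundary'' --- namely the color and role of $v_i$ together with any constraint already imposed on $v_{i+1}$ --- lies in a finite set of conditions $\{Q_0,Q_1,\ldots\}$. The decision of how to color the fan $f_i$ when advancing from $G_i$ to $G_{i+1}$ should depend only on the current condition $Q_j$ and on $|f_i|$, and sometimes on $|f_{i+1}|$ when a parity issue forces a one-fan lookahead. This is exactly the finite-state description depicted in Figure~\ref{fig:automaton}.

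First I would enumerate the conditions $Q_j$ that are needed. Each condition fixes the color of $v_i$ (say, gray up to a global swap), one of the four roles of $v_i$ among \emph{center}, \emph{leaf}, \emph{isolated}, \emph{undefined} as already defined for the attachment-edge analysis, and, when relevant, whether $c_{i+1}$ has already been forced by the coloring of $G_i$ (for instance, when $v_i$ is already the center of a gray star, making $v_{i+1}$ gray would extend a monochromatic path of length four through the star). The base case $i=1$ is trivial: color $v_1$ gray, which places $G_1$ in the state ``$v_1$ isolated, no forcing on $v_2$''. The inductive step is where the work happens: because $G$ is inner-triangulated, the fan $f_i$ together with $v_{i+1}$ forms a path $P$ whose vertices are all adjacent to $v_i$, so extending a coloring of $G_i$ in state $Q_j$ to $G_{i+1}$ amounts to $2$-coloring $P$ so that (i) no three consecutive vertices of $P$ receive the same color, (ii) $v_i$ does not become the center of a monochromatic star whose leaf in $P$ is itself the endpoint of another monochromatic edge of $P$, and (iii) the resulting role of $v_{i+1}$ belongs to one of the $Q_h$ needed to continue. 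Conditions (i) and (ii) restrict but do not over-determine the choice on $P$; condition (iii) is the source of the lookahead.

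The main obstacle will be verifying that the automaton is \emph{complete}, i.e.\ that for every reachable $Q_j$ and every relevant fan-size pattern in $\{0,1,e,o\}$ (where $e$ denotes an even size $\neq 0$ and $o$ an odd size $\neq 1$) at least one admissible transition exists. This is a finite but delicate case analysis, and the truly hard subcase is the odd/odd combination: an odd-length fan path naturally forces the color of its last vertex to equal that of its first, so if this clash with what the next fan demands, one must either flip $c_{i+1}$ to the opposite color or assign one designated leaf of $f_i$ against the natural alternation --- a move which is only safe if it does not create a monochromatic $P_4$ through $v_i$. Once every transition of Figure~\ref{fig:automaton} has been justified this way, the induction terminates at $G_{m+1}=G$ and yields the desired \twocoloring; the last step is degenerate since $|f_{m+1}|=0$. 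Finally, since each fan is processed in $O(|f_i|)$ time with $O(1)$ state transitions and $\sum_i |f_i|=O(n)$, the algorithm runs in linear time.
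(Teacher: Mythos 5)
Your plan is the same as the paper's: a left-to-right sweep over the spine maintaining a constant-size ``state'' of the partial coloring, with a one-fan lookahead to resolve parity conflicts, exactly as schematized in Figure~\ref{fig:automaton}. But as written the proposal has a genuine gap: the entire mathematical content of the theorem is the exhibition of a \emph{concrete, closed} set of states and the verification of every transition, and you explicitly defer both (``First I would enumerate the conditions\dots'', ``The main obstacle will be verifying that the automaton is complete\dots''). This is not a routine bookkeeping step that can be waved at. The paper needs six rather delicately chosen conditions $Q_0,\dots,Q_5$, describing the colors and roles of \emph{both} endpoints $v_{i-1},v_i$ of the last spine edge (not just $v_i$), and two of them ($Q_4$, $Q_5$) carry built-in promises about the future, namely $i<k$ together with $|f_i|>1$ or $|f_i|=1$; these promises are precisely what make the otherwise problematic transitions dischargeable. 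Without naming the states one cannot check closure, and closure is the theorem.

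Two of your specific guesses about the hard cases also do not match what actually happens, which suggests the case analysis would not go through as you imagine it. The lookahead is not triggered by an ``odd/odd'' fan combination: it is needed when the parity of $|f_i|$ clashes with the current state (state $Q_2$ with $|f_i|$ even and nonzero, state $Q_3$ with $|f_i|$ odd and $\neq 1$), and what must be inspected about $f_{i+1}$ is whether $|f_{i+1}|$ is $0$, $1$, or $>1$ --- not its parity. Moreover your local condition (i), that no three consecutive vertices of the fan path receive the same color, is stronger than what a \twocoloring requires (a monochromatic $P_3$ is a legal star of diameter two); imposing it is harmless for the alternating colorings the algorithm actually uses, but it is a sign that the admissibility conditions on the fan coloring have not been pinned down precisely. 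Until the states are written out and each of the finitely many $(Q_j,|f_i|,|f_{i+1}|)$ configurations is shown to admit a valid transition, the argument is a program for a proof rather than a proof.
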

\begin{proof}
Let $G$ be an outerpath with spine $v_1,\dots,v_k$. We describe an
algorithm to compute a \twocoloring of $G$. At each step $i = 1,
\dots, k$ of the algorithm we consider the spine edge
$(v_{i-1},v_{i})$, assuming that a \twocoloring of $G_i$ has already
been computed satisfying one of the following conditions (see
Figure~\ref{fig:automaton}):
\begin{description}
\item[$Q_0$:] The only colored vertex is $v_1$;
\item[$Q_1$:] $c_{i}\neq c_{i-1}$, vertex $v_{i-1}$ is the center of a star with color $c_{i-1}$, and no colored edge is incident to $v_{i}$;
\item[$Q_2$:] $c_{i}=c_{i-1}$, and no colored edge other than $(v_{i-1}, v_i)$ is incident to $v_{i-1}$ or $v_{i}$;
\item[$Q_3$:] $c_{i}\neq c_{i-1}$, vertex $v_{i-1}$ is a leaf of a star with color $c_{i-1}$, and no colored edge is incident to $v_{i}$;
\item[$Q_4$:] $c_{i}\neq c_{i-1}$, vertex $v_{i-1}$ is the center of a star with color $c_{i-1}$, and vertex $v_{i}$ is the center of a star with color $c_{i}$; further, $i<k$ and $|f_i|>1$;
\item[$Q_5$:] $c_{i}=c_{i-1}$, vertex $v_{i-1}$ is the center of a star with color $c_{i-1}$, and no colored edge other than $(v_{i-1}, v_i)$ is incident to $v_{i}$; further, $i<k$ and $|f_i|=1$.
\end{description}

Next, we color the vertices in $f_i$ in such a way that $c(G_{i+1})$
is a \twocoloring satisfying one of the conditions; refer to
Figure~\ref{fig:automaton} for a schematization of the case
analysis.
In the first step of the algorithm, we assign an arbitrary color to
$v_1$, and hence $c(G_1)$ satisfies $Q_0$. For $i=1,\dots,k$ we
color $f_i$ depending on the condition satisfied by $c(G_i)$.

\smallskip

\noindent{\textbf{Coloring }$\mathbf{c(G_i)}$ \textbf{satisfies}
$\mathbf{Q_0}$}: Independently of the cardinality of $f_i$, we color
its vertices with alternating colors so that $c_{i+1} \neq c_i$. In
this way, the only possible colored edges are incident to $v_i$ and
not to $v_{i+1}$. So, $c(G_{i+1})$ satisfies condition $Q_1$.

\smallskip

\noindent{\textbf{Coloring} $\mathbf{c(G_i)}$ \textbf{satisfies}
$\mathbf{Q_1}$}: In this case we distinguish the following subcases,
based on the cardinality of $f_i$.

\begin{itemize}

\item If $|f_i|=0$, we have that $i=k$ and hence $G_k=G$. It follows
that $c(G_k)$ is a \twocoloring of $G$.

\item If $|f_i|=1$ (that is, $f_i$ contains only $v_{i+1}$; see
Figure~\ref{fig:q1_1_post}), we set $c_{i+1}=c_i$. Since the only
neighbor of $v_{i+1}$ in $G_{i+1}$ different from $v_i$ is $v_{i-1}$, whose
color is $c_{i-1} \neq c_i$, and since $v_i$ has no neighbor with
color $c_i$ other than $v_{i+1}$, by condition $Q_1$, coloring
$c(G_{i+1})$ is a \twocoloring satisfying condition $Q_2$.

\begin{figure}[tb]
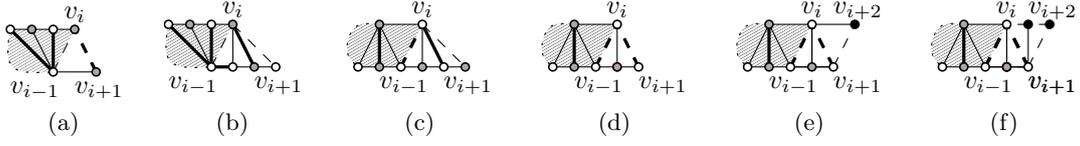

	\centering
	\subfloat[\label{fig:q1_1_post}]{\includegraphics[page=8]{figures}}
	\hfil
	\subfloat[\label{fig:q1_gt1_post}]{\includegraphics[page=9]{figures}}
	\hfil
	\subfloat[\label{fig:q2_odd_post}]{\includegraphics[page=10]{figures}}
	\hfil
	\subfloat[\label{fig:q2_even_zero}]{\includegraphics[page=11]{figures}}
	\hfil
	\subfloat[\label{fig:q2_even_one_post}]{\includegraphics[page=12]{figures}}
	\hfil
	\subfloat[\label{fig:q2_even_gt_one_post}]{\includegraphics[page=13]{figures}}
	\caption{Graph $G_{i+1}$ after coloring $f_i$ when $c(G_i)$ satisfies: 
		$Q_1$ and 
		\sref{fig:q1_1_post} $|f_i|=1$ or 
		\sref{fig:q1_gt1_post} $|f_i|>1$;
		$Q_2$ and 
		\sref{fig:q2_odd_post} $|f_i|=o$, 
		or $|f_i|=e$ and 
		\sref{fig:q2_even_zero} $|f_{i+1}|=0$, 
		\sref{fig:q2_even_one_post} $|f_{i+1}|=1$, or
		\sref{fig:q2_even_gt_one_post}  $f_{i+1} > 1$.
		Shaded regions represent $G_i$. Bold edges connect vertices with the same color, while spine edges are dashed.
	}
\end{figure}

\item If $|f_i|>1$ (see Figure~\ref{fig:q1_gt1_post}), we color the
vertices in $f_i$ with alternating colors so that $c_{i+1}\neq c_i$.
This implies that every colored edge of $G_{i+1}$ not belonging to
$G_i$ is incident either to $v_i$, if its color is $c_i$, or to
$v_{i-1}$, if its color is $c_{i-1}$; the latter case only happens
if $|f_i|$ is odd. Thus, $v_i$ (resp. $v_{i-1}$) is the center of a
star of color $c_i$ (resp. $c_{i-1}$) in $G_{i+1}$.
Since $v_i$ has no neighbor with color $c_i$ in $G_i$, while
$v_{i-1}$ is a center also in $G_i$, coloring $c(G_{i+1})$ is a
\twocoloring. Finally, since $v_{i+1}$ has no neighbors with color
$c_{i+1} \neq c_i$, by construction, $c(G_{i+1})$ satisfies
condition $Q_1$.
\end{itemize}
\noindent{\textbf{Coloring }$\mathbf{c(G_i)}$ \textbf{satisfies
}$\mathbf{Q_2}$:} We again distinguish subcases based on $|f_i|$.

\begin{itemize}
\item If $|f_i|=0$, we have that $i=k$ and hence $c(G_k)$ is a
\twocoloring of $G=G_k$.

\item If $|f_i|$ is odd, including the case $|f_i|=1$ (see
Figure~\ref{fig:q2_odd_post}), we color the vertices of $f_i$ with
alternating colors in such a way that $c_{i+1}\neq c_i$. By
construction, $c(G_{i+1})$ is a \twocoloring satisfying condition
$Q_1$.

\item If $|f_i|$ is even and different from $0$, instead, we have to
consider the cardinality of $f_{i+1}$ in order to decide the
coloring of $f_i$. We distinguish three subcases:

\smallskip

\begin{description} 
\item[$|f_{i+1}|=0$]: Note that in this case $i=k$ holds (see
Figure~\ref{fig:q2_even_zero}). We color the vertices of $f_i$ with
alternating colors so that $c_{i+1} = c_i$. Note that the unique
neighbor of $v_{i-1}$ in $f_i$ has color different from $c_{i-1}$,
since $|f_i|$ is even. Hence, all the new colored edges are incident
to $v_i$, which implies that $c(G_{i+1})$ is a \twocoloring
satisfying condition $Q_2$.
	
\item[$|f_{i+1}|=1$]: Note that $i<k$ and $f_{i+1}$ only contains
$v_{i+2}$ (see Figure~\ref{fig:q2_even_one_post}). We color the
vertices of $f_i$ with alternating colors so that $c_{i+1} = c_i$. Since (i)
all the new colored edges are incident to $v_i$, (ii) $v_i$ and
$v_{i-1}$ have no neighbor with their same color in $G_i$ (apart
from each other), (iii) $c_{i+1} = c_i$, and (iv) $i<k$, we have
that $c(G_{i+1})$ is a \twocoloring satisfying condition $Q_5$.

\item[$|f_{i+1}|>1$]: Note that $i<k$ (see
Figure~\ref{fig:q2_even_gt_one_post}). Independently of whether
$|f_{i+1}|$ is even or odd, we color the vertices of $f_i$ so that
$c_{i+1} \neq c_i$, the unique neighbor of $v_{i+1}$ different from
$v_i$ has also color $c_{i+1}$, and all the other vertices have
alternating colors. Since each new colored edge is incident to
either $v_i$ or $v_{i+1}$, since $c_{i+1} \neq c_i$, and since
$i<k$, coloring $c(G_{i+1})$ is a \twocoloring satisfying condition
$Q_4$.
\end{description}
\end{itemize}

\smallskip

\begin{figure}[tb]
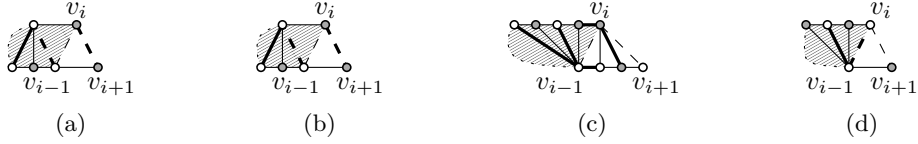

	\centering
	\subfloat[\label{fig:q3_one}]{\includegraphics[page=14]{figures}}
	\hfil
	\subfloat[\label{fig:q3_even}]{\includegraphics[page=15]{figures}}
	\hfil
	\subfloat[\label{fig:q4}]{\includegraphics[page=16]{figures}}
	\hfil
	\subfloat[\label{fig:q5}]{\includegraphics[page=17]{figures}}
	\caption{Graph $G_{i+1}$ after coloring $f_i$ when $c(G_i)$ satisfies: 
	$Q_3$ and
		\sref{fig:q3_one} $|f_i|=1$, or
		\sref{fig:q3_even} $|f_i|=e$;
	$Q_4$ \sref{fig:q4}; or
	$Q_5$ \sref{fig:q5}.
	Shaded regions represent $G_i$. Bold edges connect vertices with the same color, while spine edges are dashed.
	}
\end{figure}

\noindent{\textbf{Coloring} $\mathbf{c(G_i)}$ \textbf{satisfies}
$\mathbf{Q_3}$}:
\begin{itemize}
\item If $|f_i|=0$, we have that $i=k$ and hence $c(G_k)$ is a
\twocoloring of $G=G_k$.

\item If $|f_i|=1$ (that is, $f_i$ contains only $v_{i+1}$; see
Figure~\ref{fig:q3_one}), we set $c_{i+1}=c_i$.
As in the analogous case in which $c(G_i)$ satisfies condition
$Q_1$, we can prove that $c(G_{i+1})$ is a \twocoloring which
satisfies condition $Q_2$.

\item If $|f_i|$ is even and different from $0$ (see
Figure~\ref{fig:q3_even}), we color the vertices of $f_i$ with alternating
colors in such a way that $c_{i+1}\neq c_i$. By construction,
$c(G_{i+1})$ is a \twocoloring which satisfies condition $Q_1$.

\item If $|f_i|$ is odd and different from $1$, we again
consider the cardinality of $f_{i+1}$ in order to decide the
coloring of $f_i$. For the four possible classes of values
of $|f_{i+1}|$, the coloring strategy and the condition satisfied by
the resulting coloring $c(G_{i+1})$ are the same as for
the analogous case in which $c(G_i)$ satisfies $Q_2$ and $|f_i|$ is
even.
\end{itemize}

\noindent{\textbf{Coloring} $\mathbf{c(G_i)}$ \textbf{satisfies}
$\mathbf{Q_4}$}:
Note that $|f_i|>0$, given that $i<k$, and $|f_i|\neq 1$, by
condition $Q_4$. Independently of whether $|f_i|$ is even or odd
(see Figure~\ref{fig:q4}), we color the vertices in $f_i$ with alternating
colors so that $c_{i+1} \neq c_i$. In this way, the only possible
colored edges are incident to $v_{i-1}$ and to $v_i$, which are both
centers of a star already in $G_i$, and not to $v_{i+1}$. Hence,
$c(G_{i+1})$ is a \twocoloring satisfying condition $Q_1$.

\smallskip

\noindent{\textbf{Coloring} $\mathbf{c(G_i)}$ \textbf{satisfies}
$\mathbf{Q_5}$}:
Note that $|f_i| = 1$, by condition $Q_5$ (that is, $f_i$ only
contains $v_{i+1}$; see Figure~\ref{fig:q5}). We set $c_{i+1} \neq c_i$;
clearly, $c(G_{i+1})$ is a \twocoloring satisfying condition $Q_3$.

\smallskip

Observe that the running time of the algorithm is linear in the
number of vertices of $G$. In fact, at each step $i = 1,\dots,k$,
the condition $Q_j$ satisfied by $c(G_i)$ and the cardinalities of
$f_i$ and $f_{i+1}$ are known (the cardinality of all the fans can
be precomputed in advance), and the coloring strategy to obtain
$c(G_{i+1})$ and the condition satisfied by this coloring are
uniquely determined by these information in constant time.
\end{proof}

% =================================================================
\section{NP-completeness for (Planar) Graphs of Bounded Degree}
\label{sec:np-completeness}
% =================================================================

In this section, we study the computational complexity of the
\twocoloring and \threecoloring problems for (planar) graphs of
bounded degree.

% =================================================================
\begin{theorem} 
It is NP-complete to determine whether a graph admits a
\twocoloring, even in the case where its maximum degree is no more
than $5$.
\label{thm:2colorDeg5NpHard}
\end{theorem}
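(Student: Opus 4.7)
The plan is to reduce from \naesat (more precisely, from \naesat restricted to formulas with three literals per clause and, if needed, with bounded variable occurrence, which is well-known to be NP-complete). Given a formula $\varphi$ with variables $x_1,\dots,x_n$ and clauses $c_1,\dots,c_m$, I would construct a graph $H_\varphi$ of maximum degree at most $5$ that is \twocolorable if and only if $\varphi$ admits a not-all-equal satisfying assignment. Membership in NP is immediate, since a candidate $2$-coloring can be verified in polynomial time by checking that every monochromatic component is acyclic and has diameter at most $2$.

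The first building block I would design is a small \emph{forcing gadget} whose any \twocoloring assigns the same colour to two specified ``port'' vertices (or, symmetrically, different colours), with all internal vertices of degree at most $5$. A natural candidate is a short chain of triangles or fans, because in such graphs a path of length three of equal colour is unavoidable unless the two distinguished endpoints are synchronised in a prescribed way; this role is played in similar proofs by building on the counterexample from Lemma~\ref{lem:outercounterexample}. Chaining such gadgets gives \emph{wires} that transmit a colour, and a \emph{literal gadget} for each occurrence of a variable: a pair of ``true/false'' vertices joined by a wire, so that they must receive opposite colours. For a variable $x_i$ occurring $t_i$ times, I would introduce $t_i$ literal copies and tie them together through a small \emph{distributor subgraph} (e.g.\ a binary tree of forcing gadgets) which ensures all copies carry the same truth value while keeping degree at most $5$; this is essential because, unlike with classical colouring, the colour can be transmitted only through very constrained local structures.

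The second building block is a \emph{clause gadget} for a clause $c_j=(\ell_1\vee\ell_2\vee\ell_3)$, attached to the three corresponding literal ports. I would arrange it so that the only obstruction to extending an assignment of the ports to a \twocoloring of the clause gadget is precisely the case in which all three literal ports have the same colour: in that case the gadget necessarily contains a monochromatic $P_4$ (or an induced monochromatic cycle), while in every not-all-equal assignment the gadget admits a \twocoloring that completes the global one. A standard way to realise this is to have the three ports share a neighbourhood structure that, when all three ports agree, forces a fourth vertex of the same colour at distance at least two from one of them; conversely, any NAE assignment gives enough freedom on one ``opposite-coloured'' port to center a star and absorb the remaining monochromatic edges.

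The main obstacle is the degree bound. Simply attaching many clause gadgets to the same literal vertex or making the forcing chain too rich blows up the degree; balancing rigidity (so that the gadget indeed forces what we want, despite the freedom provided by the star-defect) against sparsity (so that the maximum degree stays $\le 5$) is the delicate point. I would therefore verify degree vertex-by-vertex: literal ports receive at most one wire plus at most one clause attachment plus at most three internal neighbours, the center of a distributor has degree bounded by a small constant, and the internal vertices of forcing gadgets are designed to have degree exactly $4$ or $5$. Once the gadgets are fixed, the equivalence between NAE-satisfying assignments of $\varphi$ and \twocolorings of $H_\varphi$ reduces to a routine case analysis on the colours of the literal ports, and the overall construction is clearly polynomial, proving NP-completeness.
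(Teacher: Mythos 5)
Your high-level strategy coincides with the paper's: a reduction from \naesat in which variable gadgets force a consistent colour encoding of truth values and clause gadgets forbid three equal colours. However, as written the proposal has a genuine gap: none of the gadgets is actually constructed. The forcing gadget is only ``a natural candidate is a short chain of triangles or fans,'' the clause gadget is ``a standard way to realise this is\dots,'' and you yourself flag that reconciling the forcing property with the degree-$5$ bound is ``the delicate point'' --- but that reconciliation \emph{is} the theorem. Without an explicit gadget one cannot check either that the intended colour constraint is really forced (star-defective colourings give the adversary a lot of local freedom, so plausible-looking gadgets often fail) or that every vertex stays within degree $5$. The paper's proof lives entirely in these concrete choices: the variable gadget is a specific $6$-vertex graph (two adjacent hubs $u_1,u_2$ each joined to all four vertices of a path $v_1v_2v_3v_4$), for which one argues that $u_1,u_2$ must receive different colours and that the colours along $v_1,\dots,v_4$ must alternate; these gadgets are then concatenated into chains whose spine alternates globally, and the clause gadget is simply a triangle, which can never be monochromatic because a monochromatic $3$-cycle is not a star.

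Two further points where your plan diverges unfavourably. First, your ``distributor subgraphs'' (binary trees of forcing gadgets) for fanning out a variable to its occurrences are both unspecified and unnecessary: the paper sizes each variable chain as $\lceil (n_i-2)/2\rceil$ copies of the basic gadget so that every spine vertex has degree $4$ and thus one spare incidence for a single clause connection, and the even/odd positions of the alternating spine directly serve as the ``true'' and ``false'' ports. Second, in the reverse direction you would still need to argue that a clause vertex cannot simply adopt the colour of its chain neighbour and hide inside a larger star; in the paper's construction this is excluded because each spine vertex already lies on a monochromatic path of length two inside its own gadget, so an equal-coloured clause neighbour would create a monochromatic $P_4$. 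Until gadgets with these verifiable properties are exhibited, the proposal is a programme rather than a proof.
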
  
% =================================================================
\begin{proof}
The problem clearly belongs to NP; a non-deterministic algorithm
only needs to guess a color for each vertex of the graph and then in
linear time can trivially check whether the graphs induced by each
color-set are forests of stars. To prove that the problem is
NP-hard, we employ a reduction from the well-known Not-All-Equal $3$-SAT
problem or \naesat for short~\cite[p.187]{Pap07}. An instance of
\naesat consists of a $3$-CNF formula $\phi$ with variables
$x_1,\ldots,x_n$ and clauses $C_1,\ldots,C_m$. The task is to find a
truth assignment of $\phi$ so that no clause has all three literals
\emph{equal} in truth value (that is, not all are true). We show how
to construct a graph $G_\phi$ of maximum vertex-degree $5$ admitting
a \twocoloring if and only if $\phi$ is satisfiable. Intuitively,
graph $G_\phi$ reflecting formula $\phi$ consists of a set of
subgraphs serving as variable gadgets that are connected to simple
$3$-cycles that serve as clause gadgets in an appropriate way; see
Figure~\ref{fig:reduction1} for an example.

Consider the graph of Figure~\ref{fig:k24}, which contains two adjacent
vertices, denoted by $u_1$ and $u_2$, and four vertices, denoted by
$v_1$, $v_2$, $v_3$ and $v_4$, that form a path, so that each of
$u_1$ and $u_2$ is connected to each of $v_1$, $v_2$, $v_3$ and
$v_4$.
We claim that in any \twocoloring of this graph $u_1$ and $u_2$ have
different colors. Assume to the contrary that $u_1$ and $u_2$ have
the same color, say white. Since $u_1$ and $u_2$ are adjacent, none
of $v_1$, $v_2$, $v_3$ and $v_4$ is white. So, $v_1, \ldots, v_4$
form a monochromatic component in gray which is of diameter~$3$; a
contradiction. Hence, $u_1$ and $u_2$ have different colors, say
gray and white, respectively.
In addition, the colors of $v_1$, $v_2$, $v_3$ and $v_4$ alternate
along the path $v_1 \rightarrow v_2 \rightarrow v_3 \rightarrow
v_4$, as otherwise there would exist two consecutive vertices $v_i$
and $v_{i+1}$, with $i=1,2,3$, of the same color, which would create a monochromatic
triangle with either $u_1$ or $u_2$.

\begin{figure}[tb]
	\begin{minipage}[b]{.5\textwidth}
		\centering 
		\begin{minipage}[b]{\textwidth}
			\centering 
			\subfloat[\label{fig:k24}{variable-gadget}]{\includegraphics[page=18]{figures}}
		\end{minipage}
		\begin{minipage}[b]{\textwidth}
			\centering 
			\subfloat[\label{fig:chain}{a chain of length $3$}]{\includegraphics[width=\textwidth,page=19]{figures}}
		\end{minipage}
	\end{minipage}
	\hfil
	\begin{minipage}[b]{.4\textwidth}
		\centering
		\subfloat[\label{fig:reduction1}{reduction; clause-gadgets are gray}]{\includegraphics[width=\textwidth,page=20]{figures}}
	\end{minipage}
    \caption{Illustration of:
    \sref{fig:k24}~a graph with $6$ vertices, 
    \sref{fig:chain}~a chain of length $3$, 
    \sref{fig:reduction1}~the reduction from \naesat to \twocoloring:
    $\phi = (x_1 \lor x_2 \lor x_3) \land (\neg x_1 \lor \neg x_2 \lor \neg x_3)$.
	The solution corresponds to the assignment $x_1=false$ and $x_2=x_3=false$.
	Sets $O_{x_1}$, $E_{x_2}$ and $E_{x_3}$ ($E_{x_1}$, $O_{x_2}$
	and $O_{x_3}$, resp.) are colored gray (white, resp.).}
    \label{fig:2colorDeg5NpHard}
\end{figure}

For $k \geq 1$, we form a \emph{chain of length $k$} that contains
$k$ copies $G_1, G_2, \ldots, G_k$ of the graph of
Figure~\ref{fig:k24}, connected to each other as follows (see
Figure~\ref{fig:chain}). For $i=1,2,\ldots,k$, let $u_1^i$, $u_2^i$,
$v_1^i$, $v_2^i$, $v_3^i$ and $v_4^i$ be the vertices of $G_i$.
Then, for $i=1,2,\ldots,k-1$ we introduce between $G_i$ and
$G_{i+1}$ an edge connecting vertices $v_4^i$ and $v_1^{i+1}$
(dotted in Figure~\ref{fig:chain}). This edge ensures that $v_4^i$
and $v_1^{i+1}$ are not of the same color, since otherwise we would have a
monochromatic path of length four. Hence, the colors of the
vertices of the so-called \emph{spine-path} $v_1^1 \rightarrow v_2^1
\rightarrow v_3^1 \rightarrow v_4^1 \rightarrow \ldots \rightarrow
v_1^k \rightarrow v_2^k \rightarrow v_3^k \rightarrow v_4^k$
alternate along this path. In other words, if the odd-positioned
vertices of the spine-path are white, then the even-positioned ones
will be gray, and vice versa. In addition, all vertices of the
spine-path have degree~$4$ (except for $v_1^1$ and $v_4^k$, which
have degree~$3$).

For each variable $x_i$ of $\phi$, graph $G_\phi$ contains a
so-called \emph{variable-chain} $\mathcal{C}_{x_i}$ of length
$\lceil \frac{n_i-2}{2} \rceil$, where $n_i$ is the number of
occurrences of $x_i$ in $\phi$, $1 \leq i \leq n$; see
Figure~\ref{fig:reduction1}. Let $O[\mathcal{C}_{x_i}]$ and
$E[\mathcal{C}_{x_i}]$ be the sets of odd- and even-positioned
vertices along the spine-path of $\mathcal{C}_{x_i}$, respectively.
For each clause $C_i=(\lambda_j \lor \lambda_k \lor \lambda_\ell)$
of~$\phi$, $1\leq i \leq m$, where $\lambda_j \in \{x_j,\neg x_j\}$,
$\lambda_k \in \{x_k,\neg x_k\}$, $\lambda_\ell \in \{x_\ell,\neg
x_\ell\}$ and $j, k, \ell \in \{1,\ldots,n\}$, graph $G_\phi$ 
contains a $3$-cycle of corresponding \emph{clause-vertices} which,
of course, cannot have the same color (\emph{clause-gadget};
highlighted in gray in Figure~\ref{fig:reduction1}). If $\lambda_j$ is
positive (negative), then we connect the clause-vertex corresponding
to $\lambda_j$ in $G_\phi$ to a vertex of degree less than $5$ that
belongs to set $E[\mathcal{C}_{x_j}]$ ($O[\mathcal{C}_{x_j}]$) of
chain $\mathcal{C}_{x_j}$. Similarly, we create connections for
literals $\lambda_k$ and $\lambda_\ell$; see the edges leaving the
triplets for clauses $C_1$ and $C_2$ in Figure~\ref{fig:reduction1}.
The length of $\mathcal{C}_{x i}$, $1 \leq i \leq n$, guarantees
that all connections are accomplished so that no vertex of
$\mathcal{C}_{x_i}$ has degree larger than $5$. Thus, $G_\phi$ is of
maximum degree~$5$. Since $G_\phi$ is linear in the size of $\phi$,
the construction can be done in $O(n + m)$ time. 

We show that $G_\phi$ is \twocolorable if and only if $\phi$ is
satisfiable. Assume first that $\phi$ is satisfiable.
If $x_i$ is true (false), then we color $E[\mathcal{C}_{x_i}]$ white
(gray) and $O[\mathcal{C}_{x_i}]$ gray (white). Hence,
$E[\mathcal{C}_{x_i}]$ and $O[\mathcal{C}_{x_i}]$ admit different
colors, as desired. Further, if $x_i$ is true (false), then we color
gray (white) all the clause-vertices of $G_\phi$ that correspond to
positive literals of $x_i$ in $\phi$ and we color white (gray) those
corresponding to negative literals. Thus, a clause-vertex of
$G_\phi$  cannot have the same color as its neighbor at the
variable-gadget. Since in the truth assignment of $\phi$ no clause
has all three literals true, no three clause-vertices belonging to
the same clause have the same color.

Suppose that $G_\phi$ is \twocolorable. By construction, each of
$E[\mathcal{C}_{x_i}]$ and $O[\mathcal{C}_{x_i}]$ is either white or
gray, $i=1,\ldots,n$. If $P[\mathcal{C}_{x_i}]$ is white, then we
set $x_i=true$; otherwise, we set $x_i=false$. Assume, to the
contrary, that there is a clause of $\phi$ whose literals are all
true or all false. By construction, the corresponding
clause-vertices of $G_\phi$, which form a $3$-cycle in $G_\phi$,
have the same color, which is a contradiction.
\end{proof}
We now turn our attention to planar graphs. Our proof follows the
same construction as the one of Theorem~\ref{thm:2colorDeg5NpHard}
but to ensure planarity we replace the crossings with appropriate
crossing-gadgets. Also, recall that the construction in
Theorem~\ref{thm:2colorDeg5NpHard} highly depends on the presence of
triangles (refer, e.g., to the clause gadgets). In the following
theorem, we prove that the \twocoloring problem remains NP-complete,
even in the case of triangle-free planar graphs. Note that in order
to avoid triangular faces, we use slightly more complicated variable-
and clause-gadgets, which have higher degree but still bounded by a
constant.

%=================================================================
\newcommand{\colorTriangFreeNpHard}{
It is NP-complete to determine whether a triangle-free planar graph
admits a \twocoloring.}
%=================================================================

\begin{theorem} 
\colorTriangFreeNpHard
\label{thm:2colorTriangFreeNpHard}
\end{theorem}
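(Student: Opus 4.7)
The plan is to adapt the reduction from \naesat used in Theorem~\ref{thm:2colorDeg5NpHard} so that the produced graph $G_\phi$ is both planar and triangle-free. Membership in NP is inherited verbatim from Theorem~\ref{thm:2colorDeg5NpHard}, so the task reduces to redesigning the variable- and clause-gadgets in a triangle-free manner, and inserting a crossing-gadget wherever two edges of the natural (non-planar) drawing would cross.

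First, I would replace the variable gadget of Figure~\ref{fig:k24}, whose correctness relied on monochromatic triangles through $u_1$ and $u_2$, by a triangle-free gadget with two distinguished terminal sets that are forced to receive opposite colours in every \twocoloring. A natural candidate is a bipartite subdivision of the original gadget: subdivide each edge that lies on a triangle an appropriate number of times and add short ``witness'' paths so that whenever the two terminal sets are not oppositely coloured, some monochromatic star of diameter greater than~$2$ is forced. Chained copies of this gadget then produce a variable-chain of any prescribed length, exactly as in the proof of Theorem~\ref{thm:2colorDeg5NpHard}. Analogously, the clause-gadget, previously a $3$-cycle, is replaced by a triangle-free bipartite gadget with three terminals (for instance, a subdivided $3$-cycle reinforced by suitable pendant paths) that still enforces the same property: the three clause-terminals cannot all receive the same colour under any \twocoloring of the gadget.

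The last ingredient is a crossing-gadget, inserted wherever two transmission edges would cross in the drawing induced by the previous construction. This gadget must be a planar triangle-free subgraph with four terminals whose \twocolorings are precisely those consistent with the two independent edges that it replaces; moreover, no monochromatic component of diameter larger than~$2$ may be creatable through the gadget regardless of how the terminals are coloured by the surrounding structure. A bounded-degree bipartite construction (essentially a planarised $K_{2,4}$-style block augmented with parity-control pendants) achieves this, and such gadgets can be cascaded along an edge that has to cross several others.

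Correctness then mirrors Theorem~\ref{thm:2colorDeg5NpHard}: a satisfying \naesat assignment of $\phi$ translates to a \twocoloring of $G_\phi$ by colouring each variable-chain according to the truth value of $x_i$ and extending the colouring through every gadget, while a \twocoloring of $G_\phi$ forces each variable-chain to alternate and each clause-gadget to avoid an all-equal pattern on its three terminals, yielding a valid \naesat assignment. Since every gadget has constant size and the total number of gadgets is polynomial in $|\phi|$, the reduction is polynomial. The main obstacle --- and the reason the gadgets are ``slightly more complicated'' than in the previous theorem --- is the design of the crossing-gadget: combining planarity, triangle-freeness and faithful emulation of two independent edges under \twocolorings simultaneously is delicate, since most obvious candidates either introduce a $3$-cycle or admit a spurious colouring that does not correspond to any valid colouring of the two edges they should replace, so verifying its properties will require a finite but non-trivial case analysis.
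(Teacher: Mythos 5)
Your overall plan is the same as the paper's: keep the \naesat reduction of Theorem~\ref{thm:2colorDeg5NpHard}, swap in triangle-free variable- and clause-gadgets, and patch crossings with a crossing-gadget. The problem is that everything that actually constitutes the proof is left as an unverified placeholder. The entire content of this theorem is the concrete design and verification of the gadgets, and your candidates are either unspecified or would likely fail. In particular, ``subdividing each edge that lies on a triangle'' of the gadget of Figure~\ref{fig:k24} does not preserve its forcing property: the original argument relies on two adjacent vertices $u_1,u_2$ forming monochromatic triangles with the $v_j$'s, and once a triangle is subdivided into a $4$- or $5$-cycle the same-color conflict disappears (a star component of diameter~$2$ can absorb the subdivision vertex), so the terminals are no longer forced to opposite colors. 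Likewise, a ``subdivided $3$-cycle reinforced by suitable pendant paths'' is just a longer cycle with trees attached, and it is not at all clear --- and you give no argument --- that it forbids all three terminals from sharing a color under a \twocoloring; pendant paths are very weak constraints in this model since a path of length up to~$2$ is itself a legal monochromatic component.

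The paper resolves exactly this difficulty with constructions you would need to discover and verify: a clause-gadget consisting of a $2\times 3$ grid plus one extra degree-$2$ vertex, where assuming all three clause-vertices share a color forces the remaining four grid vertices into a monochromatic path of length three; a \emph{transmitter-gadget} built from three such grids that forces two designated vertices $s$ and $t$ to receive the \emph{same} color (this equality-propagation device has no analogue in your sketch and is the key to building a triangle-free variable-gadget); a variable-gadget assembled from transmitters in which two vertices $x$ and $y$ must differ because otherwise a monochromatic cycle arises; and a crossing-gadget whose correctness is again a finite case analysis. Your proposal acknowledges that the crossing-gadget is delicate but offers no construction for it either. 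As written, the argument reduces the theorem to the assertion that suitable gadgets exist, which is precisely what has to be proved.
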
 
%\begin{sketch}
%We follow the same construction as for
%Theorem~\ref{thm:2colorDeg5NpHard} but to ensure planarity we
%replace the crossings with appropriate crossing-gadgets and to
%avoid triangular faces we use slightly more complicated variable-
%and clause-gadgets. A detailed description of our proof is given in
%the Appendix. 
%\end{sketch}
\begin{proof}
Membership in NP can be shown as in the proof of
Theorem~\ref{thm:2colorDeg5NpHard}. To prove that the problem in
NP-hard, we again employ a reduction from \naesat. To avoid
crossings we will construct a triangle-free planar graph $G_\phi$
(with different variable- and clause-gadgets) similar to the
previous construction, so that $G_\phi$ admits a \twocoloring if and
only if $\phi$ is satisfiable.

The clause-gadget is illustrated in Fig.\ref{fig:clause-gadget}. It
consists of a $2 \times 3$ grid (highlighted in gray) and one vertex
of degree~$2$ (denoted by~$u$ in Fig.\ref{fig:clause-gadget})
connected to the top-left and bottom-right vertices of the grid. We
claim that the \emph{clause-vertices} of this gadget (denoted by
$u$, $u_{11}$ and $u_{23}$ in Fig.\ref{fig:clause-gadget}) cannot
all have the same color. For a proof by contradiction assume that
$u$, $u_{11}$ and $u_{23}$ are all black. Since $u_{12}$, $u_{13}$,
$u_{21}$ and $u_{22}$ are adjacent either to $u_{11}$ or to
$u_{23}$, none of them is black. Hence, $u_{21} \rightarrow u_{22}
\rightarrow u_{12} \rightarrow u_{13}$ is a monochromatic path of
length three; a contradiction to the diameter of the coloring.

\begin{figure}[tb]
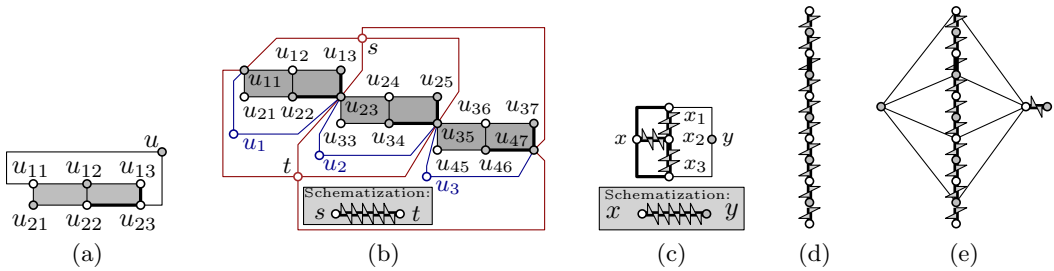

    \centering 
    \subfloat[\label{fig:clause-gadget}]{\includegraphics[page=24]{figures}}
    \hfil
    \subfloat[\label{transmitter-gadget}]{\includegraphics[page=25]{figures}}
    \hfil
    \subfloat[\label{fig:variable-gadget}]{\includegraphics[page=26]{figures}}
    \hfil
    \subfloat[\label{fig:chain-gadget}]{\includegraphics[page=27]{figures}}
    \hfil
    \subfloat[\label{fig:crossing-gadget}]{\includegraphics[page=28]{figures}}
    \caption{ 
    (a)~clause-gadget,
    (b)~transmitter-gadget,
    (c)~variable-gadget,
    (d)~a chain of length~11,
    (e)~crossing-gadget.} 
    \label{fig:3colorTrianFreeNpHard}
\end{figure}

Fig.\ref{transmitter-gadget} illustrates the so-called
\emph{transmitter-gadget}, which consists of three copies of the $2
\times 3$ grid (highlighted in gray), each of which gives rise to a
clause-gadget with the degree-$2$ vertices $u_1$, $u_2$ and $u_3$.
It also has two additional vertices (denoted by $s$ and $t$ in
Fig.\ref{transmitter-gadget}), each of which forms a clause-gadget
with each of the three copies of the rectangular grid. We claim that
in any \twocoloring of the transmitter-gadget $s$ and $t$ are of the
same colors. Otherwise, a simple observation shows that there is a
monochromatic path of length three; a contradiction to the diameter
of the coloring. A schematization of the transmitter-gadget is given
in Fig.\ref{transmitter-gadget}.

The variable-gadget is illustrated in Fig.\ref{fig:variable-gadget}.
We claim that in any \twocoloring of these gadget vertices $x$ and
$y$ must be of different colors. Assume to the contrary that $x$ and
$y$ are both white. Then, vertices $x_1$, $x_2$ and $x_3$ must also
be white, due to the transmitter-gadgets involved. Hence, $x
\rightarrow x_1 \rightarrow y \rightarrow x_3 \rightarrow x_1$ is a
white-colored cycle; a contradiction to the diameter of the
coloring. A schematization of the variable-gadget is given in
Fig.\ref{fig:variable-gadget}. The corresponding one for the chain
is given in Fig.\ref{fig:chain-gadget}.

Since we proved that the clause-vertices of the clause-gadgets
cannot all have the same color and that the variable gadget has two
specific vertices of different colors, the rest of the construction
is identical to the one of the previous theorem. Note, however, that
$G_\phi$ is unlikely to be planar, as required by this theorem.
However we can arrange the variable-gadgets and the clause-gadgets
so that the only edges that cross are the ones joining the
variable-gadgets with the clause-gadgets. Then, we replace every
crossing by the crossing-gadget illustrated in
Fig.\ref{fig:crossing-gadget}. This particular gadget has the
following two properties:
\begin{inparaenum}[(i)]
\item its topmost and bottommost vertices must be of the same color
(due to the vertical arrangement of the variable-gadgets), which
implies that
\item the leftmost and rightmost vertices must be of the same color
as well.
\end{inparaenum}
Hence, we can replace all potential crossings with the
crossing-gadget. Since the number of crossings is quadratic to
the number of edges, the size of the construction is still
polynomial. Everything else in the construction and in the argument
remains the same. 
\end{proof}
Note that Theorems~\ref{thm:2colorDeg5NpHard} and
\ref{thm:2colorTriangFreeNpHard} have been independently proven by
Dorbec et al.~\cite{Dorbec14}. In the following theorem we prove
that the \twocoloring problem remains NP-complete even if one allows one
more color and the input graph is either of maximum degree~$9$ or
planar of maximum degree~$16$. Recall that all planar graphs are
$4$-colorable.

\begin{theorem} 
It is NP-complete to determine whether a graph $G$ admits a
\threecoloring, even in the case where the maximum degree of $G$ is
no more than $9$ or in the case where $G$ is a planar graph of maximum
degree $16$.
\label{thm:3colorDeg9NpHard}
\end{theorem}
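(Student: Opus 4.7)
The plan is to reduce from \naesat, following the overall template of Theorem~\ref{thm:2colorDeg5NpHard}, but with every gadget redesigned for three colors. Membership in NP is routine: guess a color assignment and verify in linear time that each monochromatic component is a star.

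For the maximum-degree-$9$ construction, I would build three pieces of machinery: (i)~a \emph{clause-gadget} with three designated clause-vertices that cannot all receive the same color in any valid \threecoloring; (ii)~a \emph{variable-gadget} whose two ``output'' vertices are forced to take two distinct colors, so that two of the three available colors encode ``true''/``false'' while the third is an auxiliary color that appears only internally; and (iii)~a \emph{variable-chain} that replicates the two output colors along a path, exposing one copy per literal occurrence. The wiring is analogous to the \naesat reduction in Theorem~\ref{thm:2colorDeg5NpHard}: each clause-vertex of a clause $C_j$ is attached to the appropriate parity class on the chain of the corresponding variable, and the chain length is chosen proportional to the occurrence count so that every chain vertex keeps degree at most~$9$.

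For the planar, maximum-degree-$16$ variant, I would first lay out the variable-chains and clause-gadgets as rows and columns so that only the ``literal-to-clause'' edges may cross; then replace every remaining crossing with a planar \emph{crossing-gadget} in the spirit of Theorem~\ref{thm:2colorTriangFreeNpHard}, i.e., a small subgraph that forces the two ``horizontal'' and the two ``vertical'' attachment vertices to share colors pairwise, thereby transmitting the color constraint through the crossing. Since the number of crossings is polynomial in the size of $\phi$, the resulting graph has polynomial size, and the reduction remains polynomial-time. Keeping the degree bounded by~$16$ just means bounding the number of chain-edges and crossing-edges that meet at any single vertex.

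The main obstacle I expect is the design of the clause-gadget. In a \threecoloring, a triangle no longer forces three distinct colors (two adjacent vertices may share a color while the third is isolated in its own color class, which is perfectly valid as a pair of trivial stars), so the classical triangle- and $K_4$-based ``not-all-equal'' gadgets used for ordinary $3$-coloring do not work directly. The clause-gadget must instead exploit the diameter-$2$ requirement: whenever all three input vertices receive the same color, the surrounding structure should force either a monochromatic path of length three or a monochromatic cycle, contradicting the star property. Establishing these forcing arguments, and simultaneously verifying that no legitimate ``not-all-equal'' configuration is accidentally excluded, is the delicate combinatorial content of the proof; the variable-gadget and crossing-gadget will require analogous diameter-based case analyses to be certified correct.
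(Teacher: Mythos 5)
There is a genuine gap: your proposal is an architectural plan whose essential components are never constructed. You correctly identify that the entire difficulty lies in the gadgets---a clause-gadget forcing ``not all equal'' among three designated vertices under a \threecoloring, a variable-gadget forcing two outputs to differ, and a chain that propagates a binary value---but you then defer all of them (``the main obstacle I expect is the design of the clause-gadget''). With three colors these forcing arguments are genuinely hard to arrange, precisely for the reason you note in passing: a vertex can always escape into the third color, so the parity/alternation arguments that drive the chains and clause triangles in Theorem~\ref{thm:2colorDeg5NpHard} collapse. A chain vertex colored with the auxiliary third color breaks the propagation of the truth value, and nothing in your sketch prevents this. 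Without explicit gadgets and their case analyses, there is no reduction, and the specific degree bounds $9$ and $16$ in the statement are unexplained.

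The paper avoids all of this by reducing not from \naesat but from classical \THREEcoloring (NP-complete for planar graphs of maximum degree~$4$), which makes the third color an asset rather than a liability. The only gadget is a copy of $K_6$ attached at each vertex $u$ of the input graph $G$ (identifying $u$ with one vertex of the $K_6$). The key observation is that in any \threecoloring of $K_6$ the six vertices split into three monochromatic pairs---three same-colored vertices would induce a monochromatic triangle, which is not a star---so every attachment-vertex already has a same-colored neighbor inside its own copy. Consequently two adjacent attachment-vertices cannot share a color (their two private partners would yield a monochromatic path on four vertices), and a \threecoloring of $H$ projects to a proper \THREEcoloring of $G$; the converse direction colors each $K_6$ in pairs. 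The degree bound is then just $4+5=9$, and replacing $K_6$ by a planar attachment graph whose attachment-vertex has degree~$12$ gives the planar bound $4+12=16$ with no crossing-gadgets at all. If you want to pursue your \naesat route you would need to actually exhibit and verify the three-color gadgets; as it stands the proposal does not establish the theorem.
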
 
\begin{proof}
Membership in NP can be proved similarly to the corresponding one of
Theorem~\ref{thm:2colorDeg5NpHard}. To prove that the problem is
NP-hard, we employ a reduction from the well-known \THREEcoloring
problem, which is NP-complete even for planar graphs of maximum
vertex-degree $4$~\cite{d-uccp-80}. So, let $G$ be an instance of
the \THREEcoloring problem. To prove the first part of the theorem,
we will construct a graph $H$ of maximum vertex-degree $9$
admitting a \threecoloring if and only if $G$ is \THREEcolorable.

Central in our construction is the complete graph on six vertices
$K_6$, which is \threecolorable; see Figure~\ref{fig:k6}. We claim that
in any \threecoloring of $K_6$ each vertex is adjacent to exactly
one  vertex of the same color. For a proof by contradiction, assume
that there is a \threecoloring of $K_6$ in which three vertices, say
$u$, $v$ and $w$, have the same color.
From the completeness of $K_6$, it follows that $u$, $v$ and $w$
form a monochromatic components of diameter~$3$, which is a
contradiction.

\begin{figure}[tb]
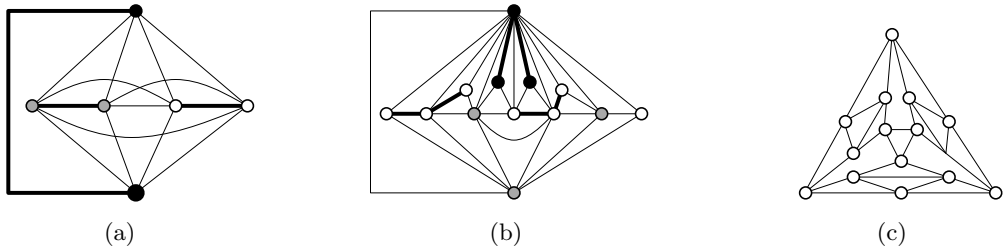

    \centering
   \begin{minipage}[b]{.32\textwidth}
   	\centering
   	\subfloat[\label{fig:k6}{}]{\includegraphics[width=.8\textwidth,page=21]{figures}}
   \end{minipage}
   \hfil
   \begin{minipage}[b]{.32\textwidth}
   	\centering
   	\subfloat[\label{fig:3colorattachment}{}]{\includegraphics[width=.8\textwidth,page=22]{figures}}
   \end{minipage}
   \hfil
   \begin{minipage}[b]{.32\textwidth}
   	\centering
   	\subfloat[\label{fig:counterexample}{}]{\includegraphics[width=.8\textwidth,page=23]{figures}}
   \end{minipage}
   \caption{
    (a)~The complete graph on six vertices $K_6$. 
    (b)~The attachment-graph for the planar case.
    (c)~A planar graph of max-degree~$4$ that is not \twocolorable.}
    \label{fig:3colorNpHard}
\end{figure}

Graph $H$ is obtained from $G$ by attaching a copy of $K_6$ at each
vertex $u$ of $G$, and by identifying $u$ with a vertex of $K_6$,
which we call \emph{attachment-vertex}. Hence, $H$ has maximum degree~$9$.
As $H$ is linear in the size of $G$, it can be constructed in linear
time.

If $G$ admits a $3$-coloring, then $H$ admits a \threecoloring in
which each attachment-vertex in $H$ has the same color as the
corresponding vertex of $G$, and the colors of the other vertices
are determined based on the color of the attachment-vertices.
To prove that a \threecoloring of $H$ determines a \THREEcoloring of
$G$, it is enough to prove that any two adjacent attachment-vertices
$v$ and $w$ in $H$ have different colors, which clearly holds since
both $v$ and $w$ are incident to a vertex of the same color in the
corresponding copies of $K_6$ associated with them.

For the second part of the theorem, we attach at each vertex of $G$
the planar graph of Figure~\ref{fig:3colorattachment} using as
attachment its topmost vertex, which is of degree~$12$ (instead of
$K_6$ which is not planar). Hence, the constructed graph $H$ is
planar and has degree~$16$ as desired. Furthermore, it is not
difficult to be proved that in any \threecoloring of the graph of
Figure~\ref{fig:3colorattachment} its attachment-vertex is always
incident to (at least one) another vertex of the same color, that
is, it has exactly the same property with any vertex of $K_6$.
Hence, the rest of the proof is analogous to the one of the
first part of the theorem.
\end{proof} 
% =================================================================

% ============================================================
\section{Conclusions}
\label{sec:conclusions}
% ============================================================

In this work, we presented algorithmic and complexity results for
the \twocoloring and the \threecoloring problems. We proved that all
outerpaths are \twocolorable and we gave a polynomial-time algorithm
to determine whether an outerplanar graph is \twocolorable. For the
classes of graphs of bounded degree and planar triangle-free graphs
we presented several NP-completeness results. However, there exist
several open questions raised by our work.
\begin{itemize}
\item In Theorem~\ref{thm:2colorDeg5NpHard} we proved that it is
NP-complete to determine whether a graph of maximum degree~$5$ is
\twocolorable. So, a reasonable question to ask is whether one can
determine in polynomial time whether a graph of maximum degree~$4$
is \twocolorable. The question is of relevance even for planar
graphs of maximum degree~$4$. Note that not all planar graphs of
maximum degree~$4$ are \twocolorable (Figure~\ref{fig:counterexample}
shows such a counterexample found by extensive case analysis), while
all graphs of maximum degree~$3$ are \konecolorable{2}~\cite{Lo66}.

\item Other classes of graphs, besides the outerpaths, that are
always \twocolorable are of interest. 

\item In Theorem~\ref{thm:3colorDeg9NpHard} we proved that it is
NP-complete to determine whether a graph of maximum degree~$9$ is
\threecolorable. The corresponding question on the complexity
remains open for the classes of graphs of maximum degree~$6$,
$7$ and $8$. Recall that graphs of maximum degree~$4$ or $5$ are
always \threecolorable. 

\item One possible way to expand the class of graphs that admit
defective colorings, is to allow larger values on the diameter of
the graphs induced by the same color class.
\end{itemize}

\paragraph{Acknowledgement:} We thank the participants of the
special session GNV of IISA'15 inspiring this work. We also thank
Pascal Ochem who brought~\cite{Dorbec14} to our attention, where some
of our NP-completeness results have been independently proven.

\bibliographystyle{abbrv}
\bibliography{starcoloring}

\end{document}